\documentclass[conference,twocolumn]{IEEEtran}

\usepackage{latexsym}
\usepackage{bm}
\usepackage[dvips]{graphics}
\usepackage{graphicx}
\usepackage{psfrag}
\usepackage{amsfonts,amsmath,amssymb,hyperref}
\usepackage{algorithm,algorithmic}
\usepackage{fixltx2e}
\usepackage{stfloats,cite}

\usepackage{dsfont,color,subfigure}


\usepackage{xspace}
\usepackage{bbm}

%
%
%
%
%
%
%
%
%
%
%
%

\newcommand{\Ac}{\mathcal{A}}

\newcommand{\Dc}{\mathcal{D}}
\newcommand{\Ec}{\mathcal{E}}

\newcommand{\Nc}{\mathcal{N}}

\newcommand{\Uc}{\mathcal{U}}
\newcommand{\Vc}{\mathcal{V}}

\newcommand{\Xc}{\mathcal{X}}
\newcommand{\Yc}{\mathcal{Y}}

\def\a{\alpha}
\def\b{\beta}

\def\e{\epsilon}

\DeclareMathOperator\E{E}
\let\P\relax
\DeclareMathOperator\P{P}





\def\textiid{i.i.d.\@\xspace}
\newcommand\iid{\ifmmode\text{ i.i.d. } \else \textiid \fi}

\newcommand{\ind}{\mathbbmss{1}}


\newtheorem{theorem}{Theorem}
\newtheorem{lemma}{Lemma}

\begin{document}

\title{On the Separation of Lossy Source-Network Coding and Channel Coding in Wireline Networks}

\author{\authorblockN{Shirin Jalali}
\authorblockA{Center for Mathematics of Information\\
California Institute of Technology\\
Pasadena, California, 91125 \\
Email: shirin@caltech.edu}
\and
\authorblockN{Michelle Effros}
\authorblockA{Department of Electrical Engineering\\
California Institute of Technology\\
Pasadena, California, 91125 \\
Email: effros@caltech.edu}
}

\maketitle

\newcommand{\p}{\mathds{P}}
\newcommand{\mb}{\mathbf{m}}
\newcommand{\bb}{\mathbf{b}}
\newcommand{\Xb}{\mathbf{X}}
\newcommand{\Yb}{\mathbf{Y}}
\newcommand{\La}{\Lambda}
\newcommand{\su}{\underline{s}}
\newcommand{\xu}{\underline{x}}
\newcommand{\yu}{\underline{y}}
\newcommand{\Xu}{\underline{X}}
\newcommand{\Yu}{\underline{Y}}
\newcommand{\Uu}{\underline{U}}

\begin{abstract}
This paper proves the separation between source-network coding and channel coding in networks of noisy, discrete, memoryless channels. We show that the set of  achievable distortion matrices in delivering a family of dependent sources across such a network  equals the set of achievable distortion matrices for delivering the same sources across a distinct network which is built by replacing each channel  by a noiseless, point-to-point bit-pipe of the corresponding capacity. Thus a code that applies source-network coding across links that are made almost lossless through the application of independent channel coding across each link asymptotically achieves the optimal performance across the network as a whole.
\end{abstract}

\section{Introduction}
In his seminal work~\cite{Shannon:48},
Shannon separates the problem
of communicating a memoryless source
across a single noisy, memoryless channel
into separate lossless source coding
and channel coding problems.
The corresponding result for lossy coding
in point-to-point channels is almost immediate
since lossy coding in a point-to-point channel
is equivalent to lossless coding
of the codeword indices, and it appears in the same work~\cite{Shannon:48}.
For a single point-to-point channel,
separation holds
under a wide variety of source and channel distributions
(see, for example~\cite{VembuV:95} and the references therein).
Unfortunately, separation does not necessarily hold in network systems.
Even in very small networks
like the multiple access channel~\cite{CoverE:80},
separation can fail when statistical dependencies
between the sources at different network locations
are useful for increasing the rate across the channel.
Since source codes tend to destroy such dependencies,
joint source-channel codes can achieve better performance
than separate source and channel codes in these scenarios.

This paper proves the separation between source-network coding
and channel coding in networks of independent noisy, discrete, memoryless channels (DMC).
Roughly, we show that the vector of achievable distortions
in delivering a family of dependent sources across such a network $\cal N$
equals the vector of achievable distortions
for delivering the same sources across a distinct network $\hat{\cal N}$.
Network $\hat{\cal N}$ is built
by replacing each channel $p(y|x)$ in $\cal N$
by a noiseless, point-to-point bit-pipe
of the corresponding capacity $C=\max_{p(x)}I(X;Y)$.
Thus a code that applies source-network coding
across links that are made almost lossless through the application
of independent channel coding across each link
asymptotically achieves the optimal performance
across the network as a whole.
Note that the operations of network source coding and network coding
are not separable, as shown in \cite{EffrosM:03}  and \cite{RamamoorthyJ:06}
for non-multicast and multicast lossless source coding, respectively.
As a result, a joint network-source code is required, and only the channel code can be separated.
While the achievability of a separated strategy is straightforward,
the converse is more difficult since
preserving statistical dependence between
codewords transmitted across distinct edges of a network of noisy links
improves the end-to-end network performance
in some networks~\cite{KoetterE:09a}.

The results derived here give a partial generalization
of~\cite{Borade:02,SongY:06} and~\cite{KoetterE:09a},
which prove the separation between network coding and channel coding
for multicast \cite{Borade:02,SongY:06} and general demands \cite{KoetterE:09a}, respectively,
under the assumption that
messages transmitted to different subset of users are independent and are  uniformly distributed.
The shift here is from independent sources  to dependent sources,
from lossless to lossy data description,
and from memoryless to non-memoryless sources.

The remainder of the paper is organized as follows.
Sections~\ref{sec:not} and~\ref{sec:setup}
describe the notation and problem set-up, respectively.
Section~\ref{sec:stacked_net} describes a tool called a stacked network
that allows us to employ typicality across copies of a network
rather than typicality across time in the arguments that follow.
Section~\ref{sec:results} gives our main results for both
memoryless sources and sources with memory.

\section{Notation}\label{sec:not}
Calligraphic letters, like $\Xc$, $\Yc$, and  $\Uc$, refer to sets, and the size of a set $\Ac$ is denoted by $|\Ac|$. For a random variable $X$, its alphabet set is represented by $\Xc$.

While a random variable is denoted by $X$, $\Xu$ represents a random vector. The length of a vector is implied in the context, and its $\ell^{\rm th}$ element is denoted by $\Xu(\ell)$.

For two vectors $\xu_1$ and $\xu_2$ of the same length $r$, $\|\xu_1-\xu_2\|_1$ denotes the $\ell_1$ distance between the two vectors defined as $\|\xu_1-\xu_2\|_1 = \sum\limits_{i=1}^r|\xu_1(i)-\xu_2(i)|$.
If $\xu_1$ and $\xu_2$ represent probability distributions, i.e., $\sum\limits_{i=1}^r\xu_1(i)=\sum\limits_{i=1}^r\xu_2(i)=1$ and $\xu_1(i), \xu_2(i)\geq 0$ for all $i\in\{1,\ldots,r\}$, then the total variation distance between $\xu_1$ and $\xu_2$ is defined as $ \|\xu_1-\xu_2\|_{\rm TV}=0.5\|\xu_1-\xu_2\|_1.$

Unlike \cite{KoetterE:09a}, this paper uses strong typicality arguments to demonstrate the equivalence between noisy channels and noiseless bit-pipes of the same capacity.  We therefore assume that the channel input and output alphabets are finite.  The alphabets for the sources described across the channel may be discrete or continuous.

\section{The problem setup}\label{sec:setup}
Consider a multiterminal network $\Nc$ consisting of $m$ nodes  interconnected via some point-to-point, independent DMCs. The network structure is represented by a directed graph $G$ with node set $\Vc = \{1,\ldots,m\}$ and edge set $\Ec$. Each directed edge $e=[v_1,v_2]\in\Ec$ implies a  point-to-point DMC between nodes $v_1$ (input) and $v_2$ (output).
Each node $a$  observes some source process $\mathbf{U}^{(a)}=\{U^{(a)}_k\}_{k=1}^{\infty}$, and is interested in reconstructing a subset of the processes observed by the other nodes. The alphabet of source $\mathbf{U}^{(a)}$, $\Uc^{(a)}$,  can be either scalar or vector-valued. This allows  node $a$ to have a vector of sources. For achieving this goal in a block coding framework, source output symbols are divided into non-overlapping blocks of length $L$. Each block  is described separately. At the beginning of the $j^{\rm th}$ coding period, each node $a$ has observed a length-$L$ block of the process $\mathbf{U}^{(a)}$, i.e., $U^{(a),j L}_{(j-1)L+1}=(U^{(a)}_{(j-1)L+1},\ldots,U^{(a)}_{j L})$. The blocks $\{U^{(a),j L}_{(j-1)L+1}\}_{a\in\Vc}$ observed at different nodes are described over the network in $n$ uses of the network (The rate $\kappa\triangleq\frac{L}{n}$ is a parameter of the code). For those $n$ time steps, at each step $t\in\{1,\ldots,n\}$, each node $a$  generates its next channel inputs as a function of $U^{(a),j L}_{(j-1)L+1}$ and its channels' outputs up to time $t-1$, here denoted by $Y^{(a),t-1}=(Y^{(a)}_1,\ldots,Y^{(a)}_{t-1})$, according to
\begin{align}
X_t^{(a)}:(\Yc^{(a)})^{t-1}\times\Uc^{(a),L}\to\Xc^{(a)}.\label{eq:Xt}
\end{align}
Note that each node might be the input to more than one channel and/or the output of more than one channel. Hence, both $X_t^{(a)}$ and $Y_t^{(a)}$ might be vectors depending on the indegree and outdegree of node $a$. The reconstruction at node $b$ of the block observed at node $a$ is denoted by $\hat{U}^{(a\to b),L}$. This reconstruction is a function of the source observed at node $b$ and node $b$'s channel outputs, i.e., $\hat{U}^{(a\to b),L}= \hat{U}^{(a\to b)}(Y^{(b),n},U^{(b),L})$, where
\begin{align}
\hat{U}^{(a\to b)}:(\Yc^{(b)})^{n}\times\Uc^{(a),L}\to\hat{\Uc}^{(a\to b),L}.\label{eq:Uhat_t}
\end{align}

The performance criterion for a coding scheme is its induced expected average distortions between sources and reconstruction blocks, i.e., for all $a,b\in\Vc$
\begin{align*}
\E d^{(a\to b)}_L(U^{(a),L},\hat{U}^{(a\to b),L}) \triangleq \E \frac{1}{L}\sum\limits_{k=1}^Ld^{(a\to b)}(U^{(a)}_k,\hat{U}^{(a\to b)}_k),
\end{align*}
where $d^{(a\to b)}:\Uc^{(a)}\times\hat{\Uc}^{(a\to b)}\rightarrow {\mathds{R}}^+$ is a per-letter distortion measure. As mentioned before $\Uc^{(a)}$ and $\hat{\Uc}^{(a\to b)}$ are either scalar or vector-valued. This allows the case where node $a$ observes multiple sources and node $b$ is interested in reconstructing a subset of them.  Let 
\[
d_{\max}\triangleq\max\limits_{a,b,\in \Vc, \a \in \Uc^{(a)},\b\in\hat{\Uc}^{(a\to b)}} d^{(a\to b)}(\a,\b)<\infty.
\]
If node $b$ is not interested in reconstructing node $a$, then  $d^{(a\to b)}\equiv 0$. 

The distortion matrix $\mathbf{D}$ is said to be achievable at a rate $\kappa$ in a network $\Nc$, if for any $\e>0$, there exists a pair $(L,n)$, $L/n=\kappa$, and  block length $n$ coding scheme such that
\begin{align}
\E d^{(a \to b)}_L(U^{(a),L},\hat{U}^{(a\to b),L}) \leq D(a,b) + \e,\label{eq:dist}
\end{align}
for any $a,b\in \Vc$.

\section{Stacked network} \label{sec:stacked_net}
For a given network $\Nc$, the corresponding $N$-fold stacked network $\underline{\Nc}$ is defined as $N$ copies of the original network \cite{KoetterE:09a}. That is, for each node and each edge in $\Nc$, there are $N$ copies of the same node or  same edge in  $\underline{\Nc}$. At each time instance, each node has access to the data available at nodes which are its copies, and potentially uses this extra information in generating the channel inputs of the future time instances. Likewise, in decoding, all $N$ copies of a node can collaborate in reconstructing the signals. This is made more precise in the following two definitions
\begin{align}
\Xu_t^{(a)}:(\underline{\Yc}^{(a)})^{t-1}\times\Uc^{(a),NL}\to\underline{\Xc}^{(a)},\label{eq:Xt_ul}
\end{align}
and
\begin{align}
\hat{\Uu}^{(a\to b)NL}:(\underline{\Yc}^{(b)})^{n}\times\Uc^{(b),NL}\to\hat{\Uc}^{(a\to b),NL},\label{eq:Uhat_t_ul}
\end{align}
which correspond to \eqref{eq:Xt} and \eqref{eq:Uhat_t} in the original network. In \eqref{eq:Xt_ul} and \eqref{eq:Uhat_t_ul} all the vectors are of length $N$.

In an $N$-layered network, the distortion between the source observed at node $a$ and its reconstruction at node $b$ is defined as
\begin{align}
D_N(a,b)=\E \left[d^{(a\to b)}_{NL}(U^{(a\to b),NL},\hat{U}^{(a\to b),NL})\right],
\end{align}
for any $a,b\in\{1,\ldots,m\}$.

A distortion matrix $\mathbf{D}$ is said to be achievable in the stacked network at some rate $\kappa$ if for any given $\e>0$, there exist $N$, $n$ and $L$ large enough, such that $D_N(a,b) \leq D(a,b)+\e,$ for all $a,b\in\{1,\ldots,m\}$.
Note that the dimension of the distortion matrices in both single layer and multi-layer networks is $m\times m$. Let $\Dc(\kappa,\Nc)$ and  $\Dc_{s}(\kappa,\underline{\Nc})$ denote the closure of the set of achievable distortion matrices at some rate $\kappa$ in a network $\Nc$ and its stacked version $\underline{\Nc}$ respectively. The following theorem establishes the relationship between the two sets.

\begin{theorem} \label{thm:single_eq_multi}
At any rate $\kappa$, 
\begin{align}
\Dc(\kappa,\Nc)=\Dc_{s}(\kappa,\underline{\Nc}).
\end{align}
\end{theorem}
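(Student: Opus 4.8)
The plan is to prove the two set inclusions separately. The direction $\Dc(\kappa,\Nc)\subseteq\Dc_s(\kappa,\underline{\Nc})$ is the easy one: given a single-layer code achieving a distortion matrix $\mathbf{D}$, I would simply run $N$ independent copies of that same code, one on each layer of the stacked network, ignoring the extra cross-layer collaboration permitted by \eqref{eq:Xt_ul} and \eqref{eq:Uhat_t_ul}. Since the layers are independent copies, the per-layer expected distortions are unchanged, and averaging over the $N$ layers yields exactly $D_N(a,b)=\E d^{(a\to b)}_L(\cdot,\cdot)\le D(a,b)+\e$ for every source–reconstruction pair. Thus every achievable distortion matrix in $\Nc$ is achievable in $\underline{\Nc}$, and closures preserve the inclusion.

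The substantive direction is $\Dc_s(\kappa,\underline{\Nc})\subseteq\Dc(\kappa,\Nc)$, and this is where the main work lies. The idea is that the stacked network of depth $N$ is itself just a larger instance that can be emulated \emph{in time} by the single-layer network. Concretely, given a stacked code that uses $n$ network uses per coding period and achieves $D_N(a,b)\le D(a,b)+\e$, I would have each node of $\Nc$ collect a super-block of $N$ consecutive source blocks (of total length $NL$), and then simulate the $N$ layers sequentially: the single-layer channel $p(y\mid x)$ is used $Nn$ times, with the $k^{\rm th}$ group of $n$ uses playing the role of the $n$ channel uses on layer $k$. Because a node in $\Nc$ at time step $t$ may condition its channel input on all of its past outputs $Y^{(a),t-1}$, it has access to the outputs of all previously-simulated layers, which is exactly the cross-layer information that \eqref{eq:Xt_ul} grants to a stacked encoder; likewise the decoder \eqref{eq:Uhat_t} sees all $Nn$ outputs and can reproduce the joint stacked decoder \eqref{eq:Uhat_t_ul}. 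The resulting single-layer code operates at rate $\frac{NL}{Nn}=\frac{L}{n}=\kappa$, so the rate is preserved, and its expected average distortion over the length-$NL$ super-block equals the stacked distortion $D_N(a,b)$, giving achievability in $\Nc$.

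The one point that requires care, and which I expect to be the main obstacle, is the matching of distortion \emph{definitions} across the two settings. In $\Nc$ the performance criterion \eqref{eq:dist} is a per-letter average distortion over a block, whereas $D_N(a,b)$ is defined as the expected block distortion $\E d^{(a\to b)}_{NL}(\cdot,\cdot)$ over the stacked layers; I must verify that the sequential-emulation code interprets the $NL$ simulated source symbols as a single length-$NL$ block of the single-layer problem, so that its normalized distortion $\frac{1}{NL}\sum_{k=1}^{NL} \E d^{(a\to b)}(\cdot,\cdot)$ coincides with the layer-averaged quantity $D_N(a,b)$. This hinges on the sources being indexed so that the $N$ layers correspond to $N$ disjoint, contiguous source blocks, and on the distortion measure being additive per-letter, which the setup guarantees. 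Once this bookkeeping is pinned down, the two inclusions combine to give $\Dc(\kappa,\Nc)=\Dc_s(\kappa,\underline{\Nc})$; I would finish by noting that taking closures is harmless since both constructions preserve achievability up to the same $\e$.
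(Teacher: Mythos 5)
Your first inclusion, $\Dc(\kappa,\Nc)\subseteq\Dc_s(\kappa,\underline{\Nc})$, matches the paper's argument exactly and is fine. The gap is in your emulation for the converse inclusion $\Dc_s(\kappa,\underline{\Nc})\subseteq\Dc(\kappa,\Nc)$: you propose to simulate the $N$ layers \emph{sequentially}, with the $k^{\rm th}$ group of $n$ channel uses playing the role of layer $k$. This does not reproduce the information structure of the stacked code. By \eqref{eq:Xt_ul}, the stacked encoder at node $a$ and time $t$ produces the length-$N$ input vector $\Xu_t^{(a)}$ as a function of $(\underline{\Yc}^{(a)})^{t-1}$, i.e., of the channel outputs at times $1,\ldots,t-1$ in \emph{all} $N$ layers simultaneously. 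In particular, the time-$t$ input on layer $k$ may depend on the time-$(t-1)$ output on layer $k'>k$. Under your layer-by-layer schedule, when you reach layer $k$ you have seen nothing from layers $k+1,\ldots,N$, so you cannot evaluate the stacked encoding map for a general stacked code; your claim that the past outputs available in $\Nc$ are ``exactly the cross-layer information that \eqref{eq:Xt_ul} grants'' is false for any code that exploits such dependencies, and the inclusion must hold for arbitrary stacked codes.

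The repair is to interleave by time step rather than by layer, which is what the paper does: during single-layer times $1,\ldots,N$ send the time-$1$ inputs of layers $1,\ldots,N$ (these depend only on the sources, so no channel outputs are needed yet); during times $N+1,\ldots,2N$ send the time-$2$ inputs of the $N$ layers, which may now legitimately depend on all $N$ time-$1$ outputs already collected; and so on for $n$ rounds of $N$ uses each. This ordering makes each node's accumulated observations after $Nn$ uses identically distributed to those of its $N$ copies in $\underline{\Nc}$, so the stacked decoder \eqref{eq:Uhat_t_ul} can be applied verbatim and the distortion bookkeeping you describe (rate $NL/(Nn)=\kappa$, length-$NL$ super-block, per-letter additivity) then goes through as you intend.
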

\begin{proof}
\begin{itemize}

\item[i.] Proof of $\Dc(\kappa,\Nc) \subseteq \Dc_s(\kappa,\underline{\Nc})$. Consider any $\mathbf{D}\in {\rm int}(\Dc(\kappa,\Nc))$. Then for any $\e>0$, there exists a coding operating scheme at rate $\kappa=L/n$ on $\Nc$ such that \eqref{eq:dist} is satisfied. For any $N$, a stacked network that uses this same coding strategy independently in each layer achieves
    \begin{align}
    \E [ d^{(a\to b)}_{NL}&(U^{(a\to b),NL},\hat{U}^{(a\to b),NL})] \nonumber\\
    &=\frac{1}{N} \sum\limits_{\ell=1}^N \E [d^{(a\to b)}_{L}(U^{(a\to b),\ell L}_{(\ell-1)L+1},\hat{U}^{(a\to b),\ell L}_{(\ell-1)L+1})]\nonumber\\
    &\leq\frac{1}{N}\sum\limits_{\ell=1}^N D(a,b)+\e\nonumber\\
    &=D(a,b)+\e.
    \end{align}

\item[ii.] $\Dc_s(\kappa,\underline{\Nc}) \subseteq \Dc(\kappa,\Nc)$. Let $\mathbf{D}\in {\rm int} (\Dc_s(\kappa,\underline{\Nc}))$. Since $\mathbf{D}\in{\rm int}(\Dc_{s}(\kappa,\underline{\Nc}))$, for any $\e>0$, there exists integers $N$, $n$, and $L$ such that a stacked network consisting of $N$ layers along with a block length $n$ coding scheme for $L$ source symbols on this stacked network achieves 
    \[
    \E \left[d^{(a\to b)}_{NL}(U^{(a\to b),NL},\hat{U}^{(a\to b),NL})\right] \leq D(a,b) + \e,
    \]
 for all $a,b\in\Vc$. The same coding scheme can be used in a single-layer network as follows. Consider a single layer network where each node observes a length-$NL$ block of source symbols and describes the block in the next $Nn$ time steps. At times $t\in\{1,\ldots,N\}$, each node $a$ sends what would have been sent at time 1 by node $a$ in layer $t$ of the stacked network. After that, having collected the output of the previous $N$ time steps, at times $t\in\{N+1,\ldots,2N\}$, node $a$ sends the outputs of the same node at time 2 in layer $t-N$ (Note that in the first $N$ time steps, node $a$'s output is only a function of its own source, not the channels' outputs. It only collects the channel outputs in order to use them during the next $N$ time steps.). The same strategy is used in $n$ time intervals, each comprising $N$ network uses. During each period, the new channel outputs observed by node $a$ are recorded to be used in the future periods, but do not affect the next inputs generated by that node during that time period. Using this strategy, at the end of $nN$ channel uses, each node's observation has exactly the same distribution as the collection of observations of its $N$ copies in the stacked networks. Therefore, applying the same decoding rule will result in the same performance. Hence, $\mathbf{D}\in\Dc(\kappa,\Nc)$.

\end{itemize}
\end{proof}

\section{Replacing a noisy channel with a bit pipe} \label{sec:results}
\subsection{Memoryless sources}

In this section we assume all sources are jointly i.i.d., i.e., for any $k\geq1$, $\P(U^{(1),k},\ldots,U^{(m),k})=\prod\limits_{i=1}^k\P(U^{(1)}_i,\ldots,U^{(m)}_i)$, where $\P(U^{(1)}_i,\ldots,U^{(m)}_i)$ does not depend on $i$. Note that at each time instant the sources might be correlated with each other.

In the described network $\Nc$, for some $a,b\in\Vc$ such that $[a,b]\in\Ec$, consider the noisy channel connecting these two nodes. The channel is described by its transition probabilities $\{p(y|x)\}_{x\in\Xc,y\in\Yc}$, and has some finite capacity $C=\max\limits_{p(x)}I(X;Y).$ Now consider a network $\Nc'$ which is identical to $\Nc$ except for the noisy channel between $a$ and $b$, which is replaced by a bit-pipe of capacity $C$. 
\begin{theorem}\label{thm:main}
For any $\kappa>0$,
\begin{align}
\Dc(\kappa,\Nc)=\Dc(\kappa,\Nc').
\end{align}
\end{theorem}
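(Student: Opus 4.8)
The plan is to prove the two inclusions separately, reducing both of them to the stacked network via Theorem~\ref{thm:single_eq_multi}. Writing $\underline{\Nc}$ and $\underline{\Nc}'$ for the stacked versions of $\Nc$ and $\Nc'$, Theorem~\ref{thm:single_eq_multi} gives $\Dc(\kappa,\Nc)=\Dc_s(\kappa,\underline{\Nc})$ and $\Dc(\kappa,\Nc')=\Dc_s(\kappa,\underline{\Nc}')$, so it suffices to show the two inclusions between $\Dc_s(\kappa,\underline{\Nc})$ and $\Dc_s(\kappa,\underline{\Nc}')$. The point of this reduction is that in the stacked network the single replaced edge appears as $N$ parallel copies, so at each fixed time step $t\in\{1,\dots,n\}$ node $a$ presents a length-$N$ vector $\Xu_t$ to the edge and node $b$ collects a length-$N$ vector; I may therefore run ordinary blocklength-$N$ channel coding and channel simulation ``across layers'' while leaving the $n$ network uses (the network-coding dimension) untouched, and let $N\to\infty$.

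For the achievability inclusion $\Dc_s(\kappa,\underline{\Nc}')\subseteq\Dc_s(\kappa,\underline{\Nc})$, I would start from a code on $\underline{\Nc}'$ achieving a matrix $\mathbf{D}$ and simulate its bit-pipe with the noisy channel. At each time step the bit-pipe delivers $NC$ bits over the $N$ copies; applying a good channel code of blocklength $N$ and rate slightly below $C$ across the $N$ copies of the DMC delivers these bits with probability of error vanishing as $N\to\infty$. On the (overwhelmingly likely) error-free event the two networks behave identically, and on the error event the per-letter distortion is at most $d_{\max}$, so the induced distortion exceeds $\mathbf{D}$ by at most $d_{\max}$ times a vanishing error probability. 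The small gap between the rate reliably carried ($<C$) and the bit-pipe capacity $C$ is handled by standard $\epsilon$-arguments and the closure in the definition of the region; this is the straightforward separation direction noted in the introduction.

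The converse inclusion $\Dc_s(\kappa,\underline{\Nc})\subseteq\Dc_s(\kappa,\underline{\Nc}')$ is the substantive direction, and I would establish it by channel simulation (a reverse Shannon argument) realized through strong typicality. Fix a code on $\underline{\Nc}$ achieving $\mathbf{D}$. At each time step $t$, conditioned on the past, node $a$'s layer-vector $\Xu_t$ is determined and the true output is conditionally memoryless, $\Yu_t\mid\Xu_t\sim\prod_{\ell=1}^N p(\cdot\mid \Xu_t(\ell))$. Let $q$ denote the type of $\Xu_t$. Using a codebook of $\Yu$-sequences shared between $a$ and $b$ (soft-covering/likelihood encoding), node $a$ can select and send over the bit-pipe an index of rate $\approx I(X;Y)_q$ so that node $b$ reconstructs a vector $\hat{\Yu}_t$ whose conditional law given $\Xu_t$ is $\epsilon$-close in $\|\cdot\|_{\rm TV}$ to the true channel. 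The crucial fact that makes the bit-pipe of capacity $C$ sufficient is that $I(X;Y)_q\le C=\max_{p(x)}I(X;Y)$ for \emph{every} input type $q$; since there are only polynomially many types, node $a$ can additionally convey the realized type with vanishing rate overhead and use one codebook per type. Because $\hat{\Yu}_t$ is produced from $\Xu_t$ and shared randomness alone, it inherits the Markov relation $\hat{\Yu}_t-\Xu_t-(\text{rest of the network})$ that the real output satisfies; matching the conditional law therefore matches the full network joint distribution up to a total variation error that telescopes to a small quantity over the $n$ steps and the single replaced edge. Since each $D_N(a,b)$ is an average of per-letter distortions bounded by $d_{\max}$, it is a bounded functional of this joint law and changes by at most $d_{\max}$ times the total variation error.

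I expect the converse to be the main obstacle, for exactly the reason highlighted in the introduction: a noisy channel can manufacture statistical dependence between what crosses the replaced edge and the rest of the network, and a naive bit-pipe simulation that only reproduced the marginal channel behavior would destroy it. The resolution above is to reproduce the \emph{conditional} law of $\Yu_t$ given $\Xu_t$ together with its Markov structure, which is what preserves these correlations. The second delicate point is the common randomness that the reverse Shannon simulation nominally requires: here it is harmless, since the shared codebook and any auxiliary shared string are simply part of the agreed, deterministic code at both endpoints, and one may derandomize directly on the expected distortion (which the soft-covering lemma controls in expectation) to obtain a fixed code on $\underline{\Nc}'$ achieving $\mathbf{D}+\epsilon$. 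Letting $\epsilon\to0$ and using the closure then completes the inclusion.
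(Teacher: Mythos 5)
Your overall architecture coincides with the paper's: both directions are reduced to the stacked networks via Theorem~\ref{thm:single_eq_multi}, the inclusion $\Dc_s(\kappa,\underline{\Nc}')\subseteq\Dc_s(\kappa,\underline{\Nc})$ is handled exactly as in the paper by blocklength-$N$ channel coding across layers with the residual distortion bounded by $d_{\max}$ times a vanishing error probability, and the converse is a channel-simulation argument across the $N$ parallel copies of the replaced edge. Where you genuinely diverge is in the technical core of the converse. The paper runs the single-network code i.i.d.\ across layers, shows by an exchangeability computation (equations \eqref{eq:single_layer_t1}--\eqref{eq:layer_t1}) that the expected distortion is a function only of the \emph{empirical} distribution of $(\Xu^n,\Yu^n)$ across layers, and then invokes the empirical-coordination result of \cite{CuffP:09} (which needs no common randomness) to match that empirical distribution, propagating the approximation through the $n$ time steps by an induction supported by Lemma~\ref{lemma:1} and the Dominated Convergence Theorem. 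You instead match the \emph{conditional law} of the synthesized output given $\Xu_t$ in total variation via per-type soft-covering/likelihood encoding, exploit the Markov structure $\hat{\Yu}_t-\Xu_t-(\text{rest})$ to telescope the TV error over the $n$ steps, and derandomize the required common randomness at the level of the expected distortion. Each route buys something: the paper's avoids common randomness entirely but leans on the layers being i.i.d.\ (which is why it needs the separate interleaving trick for mixing sources), whereas your per-type synthesis does not care about the distribution across layers and so would cover sources with memory more directly, at the price of importing the stronger synthesis machinery and the derandomization step. One caveat applies equally to both arguments: the simulation rate needed is $I(X;Y)_q+\epsilon'$, which can exceed the bit-pipe capacity $C$ when the realized input type is near capacity-achieving; you wave at this with ``$\approx$'' and the closure of the region, and the paper is silent on it, so it is not a gap relative to the paper, but a fully rigorous treatment needs the continuity-in-capacity argument of \cite{KoetterE:09a}.
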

\begin{proof}[Proof outline]
By Theorem \ref{thm:single_eq_multi}, the achievable region of a network is equal to the achievable region of its stacked version. Hence, it suffices to prove that $\Dc_s(\kappa,\underline{\Nc})=\Dc_s(\kappa,\underline{\Nc'})$.
\begin{itemize}
\item[i.] $\Dc_s(\kappa,\underline{\Nc}')\subseteq\Dc_s(\kappa,\underline{\Nc})$: Let $\mathbf{D}\in{\rm int}(\Dc_s(\kappa,\underline{\Nc}'))$. We need to show that $\mathbf{D}\in\Dc_s(\kappa,\underline{\Nc})$ as well. Note that ${\Nc}$ and $\Nc'$ are identical except for the DMC connecting nodes $a$ and $b$ in $\Nc$ which is replaced by a bit-pipe of capacity $C$ in $\Nc'$. We next show that  any code for $\underline{\Nc}'$ can be  operated on $\underline{\Nc}$  with a similar expected distortion. Let the number of layers  in both networks be $N$. Given the capacity of the bit-pipes, the number of bits that can be carried from $a$ to $b$ in $\underline{\Nc}'$ is at most $NR$, where $R<C$. Hence, if $N$ is large enough, the same information can be transmitted from $a$ to $b$ in $\underline{\Nc}$ by doing appropriate channel coding across the layers over the noisy channel and its copies connecting $a$ and $b$ in $\underline{\Nc}$. Let $P_{e,(a\to b)}$ denote the probability of error of the channel code operating over the channel corresponding to the edge $[a,b]$ and its copies in  $\underline{\Nc}$, and let $P_{e,\max}=\max_{[a,b]\in\Ec}P_{e,a\to b}$.  Then the extra expected distortion introduced at each reconstruction point is bounded above by $|\Ec|P_{e,\max}d_{\max}$ and can be made arbitrarily small.

\item[ii.] $\Dc(\kappa,\underline{\Nc})\subseteq\Dc_s(\kappa,\underline{\Nc'})$: Let $\mathbf{D}\in {\rm int}(\Dc(\kappa,\Nc))$. We prove that $\mathbf{D}\in\Dc_s(\kappa,\underline{\Nc}')$. Consider a code defined on $\Nc$ that achieves within $\e$ of $\mathbf{D}$, and consider the $N$-fold stacked version of $\Nc$, $\underline{\Nc}$. Assume that the same code is applied independently in each layer. We first show that, when all sources are memoryless and uniformly distributed, the performance of the code given the realization of  $(\underline{X}_1,\underline{Y}_1)$ only depends on the empirical distribution of $(\underline{X}_1,\underline{Y}_1)$ defined as
\begin{align}
\hat{p}_{[\underline{X}_1,\underline{Y}_1]}(x,y)=\frac{1}{N}\sum\limits_{\ell=1}^N\ind_{(\underline{X}_1(\ell),\underline{Y}_1(\ell))=(x,y)},
\end{align}
for all $x\in\Xc$ and $y\in\Yc$. Here the subscript $1$ refers to time $t=1$. After establishing this, we use the result proved in \cite{CuffP:09} and show that at time $t=1$ we can simulate the performance of the noisy link by a bit-pipe of the same capacity. For the rest of the proof, let $U=\{U_i\}$ and $\hat{U}=\{\hat{U}_i\}$ denote some i.i.d. source observed at some node in $\Vc$ and its reconstruction at some other node in $\Vc$.

In the original network,
\begin{align}
&\E d_L(U^{L},\hat{U}^{L}) = \sum_{\substack{
              x\in\Xc \\
              y\in\Yc }} \E\left[d_L(U^{L},\hat{U}^{ L})\left|(X_1,Y_1)=(x,y)\right.\right]\nonumber\\
&\hspace{2.8cm}\times\P\left((X_1,Y_1)=(x,y)\right).\label{eq:single_layer_t1}
\end{align}

On the other hand, in the $N$-fold stacked network,
\begin{align}
&\E \left[d_{NL}(U^{NL},\hat{U}^{NL})\right] \nonumber\\
&=\E\left[\sum\limits_{\ell=1}^N\frac{d_{L}\left(U^{\ell L}_{(\ell-1)L+1},\hat{U}^{\ell L}_{(\ell-1)L+1}\right)}{N} \times\right. \nonumber\\
&\hspace{1cm} \left.\sum_{\substack{
              x\in\Xc \\
              y\in\Yc }} \ind_{(\underline{X}_1(\ell),\underline{Y}_1(\ell))=(x,y)} \right]\nonumber\\
&=\E \left[\sum_{\substack{
              x\in\Xc \\
              y\in\Yc }}\sum\limits_{\ell=1}^N\right.\nonumber\\
&\left.\frac{d_{L}\left(U^{\ell L}_{(\ell-1)L+1},\hat{U}^{\ell L}_{(\ell-1)L+1}\right) \ind_{(\underline{X}_1(\ell),\underline{Y}_1(\ell))=(x,y)}}{N}  \right]\nonumber\\
&= \sum_{\substack{
              x\in\Xc \\
              y\in\Yc }} \E\left[d_L(U^{L},\hat{U}^{L})\left|(X_1,Y_1)=(x,y)\right.\right]\times\nonumber\\
&\hspace{1 cm} \E[ \hat{p}_{[\Xu_1,\Yu_1]}(x,y)].\label{eq:layer_t1}
\end{align}

Comparing \eqref{eq:single_layer_t1} and \eqref{eq:layer_t1} reveals that the desired result will follow if we can find a coding scheme for which,
\begin{align}
\left|\P\left((X_1,Y_1)=(x,y)\right)-\E[ \hat{p}_{[\Xu_1,\Yu_1]}(x,y)]\right|,
\end{align}
can be made arbitrary small.

To prove this, consider a channel with input  drawn i.i.d. from some distribution $p(x)$. The encoder observes $N$ source symbols and sends a message of $NR$ bits to the decoder. The decoder converts these $NR$ bits into a reconstruction block $\Yu=(Y_1,\ldots,Y_N)$. The empirical joint distribution between the channel input and channel output induced by the bit pipe is defined in the classical sense as follows
\[
\hat{p}_{[\Xu,\Yu]}(x,y)=\frac{1}{N}\sum\limits_{\ell=1}^N\ind_{(\Xu(\ell),\Yu(\ell))=(x,y)}.
\]
Consider a DMC described by  transition probabilities  $\{p(y|x)\}_{x\in\Xc,y\in\Yc}$ whose input is an i.i.d. process distributed according to some distribution $p(x)$. In \cite{CuffP:09}, it is shown that, as long as $R>I(X;Y)$, any such channel  can be simulated by a  bit pipe of rate at most $R$ such that the total variation between $\hat{p}_{[\Xu,\Yu]}(x,y)$ and $p(x,y)=p(x)p(y|x)$ can be made arbitrarily small for large enough block lengths. In other words, there exists a sequence of coding schemes over the bit-pipe such that
\begin{align}
\left\|\hat{p}_{[\Xu,\Yu]}-p\right\|_1\stackrel{n\to\infty}{\longrightarrow} 0\;{\rm a.s.}
\end{align}
(where $\hat{p}_{[\Xu,\Yu]}$ and $p$ are vectors describing distributions ($\hat{p}_{[\Xu,\Yu]}(x,y): x,y\in\Xc,\Yc$) and ($p(x,y): x,y\in\Xc,\Yc$) respectively.) 

Combining this result with our initial claim yields the desired result, i.e., at time $t=1$, we can replace the noisy link by a bit-pipe. To extend this result to the next $n-1$ time steps, we use induction. Note that in the original network
\begin{align}
&\E d_L(U^{L},\hat{U}^{L}) =\nonumber\\
& \sum_{\substack{
              x_t\in\Xc \\
              y_t\in\Yc \\
              t=1,\ldots,n }} \E\left[d_L(U^{L},\hat{U}^{ L})\left|\bigcap\limits_{t=1}^n\{(X_t,Y_t)=(x_t,y_t)\}\right.\right]\nonumber\\
&\hspace{1cm}\times\P\left((X^n,Y^n)=(x^n,y^n)\right).
\end{align}

On the other hand, using the same analysis used in deriving \eqref{eq:layer_t1}, in the $N$-fold stacked network,
\begin{align}
&\E \left[d_{NL}(U^{NL},\hat{U}^{NL})\right] \times\nonumber\\
& = \sum_{\substack{
              x_t\in\Xc \\
              y_t\in\Yc \\
              t=1,\ldots,n }} \E\left[d_L(U^{L},\hat{U}^{L})\left|(X^n,Y^n)=(x^n,y^n)\right.\right]\nonumber\\
&\hspace{1cm}\times \E\left[ \frac{\left|\left\{\ell:(\Xu^t(\ell),\Yu^t(\ell))=(x^t,y^t)\right\}\right|}{L}\right].
\end{align}
Therefore, we need to show that by appropriate coding over the bit-pipes,
\begin{align}
&\left|\P\left((X^n,Y^n)=(x^n,y^n)\right)\textcolor{white}{ \frac{\Xu^t(\ell)}{L}}\right.\nonumber\\
&\left.\hspace{1cm}- \E\left[ \frac{\left|\left\{\ell:(\Xu^t(\ell),\Yu^t(\ell))=(x^t,y^t)\right\}\right|}{L}\right]\right|\label{eq:p_e}
\end{align}
can be made arbitrary small.  Note that
\begin{align}
&\P\left((X^n,Y^n)=(x^n,y^n)\right)=\prod\limits_{t=1}^n\nonumber\\
& \P\left((X_t,Y_t)=(x_t,y_t)\left|(X^{t-1},Y^{t-1})=(x^{t-1},y^{t-1})\right.\right),\label{eq:single_layer_t_all}
\end{align}
and 
\begin{align}
&\frac{\left|\left\{\ell:(\Xu^n(\ell),\Yu^n(\ell))=(x^n,y^n)\right\}\right|}{L}\nonumber\\
&=\prod\limits_{t=1}^n\frac{\left|\left\{\ell:(\Xu^t(\ell),\Yu^t(\ell))=(x^t,y^t)\right\}\right|}
{\left|\left\{\ell:(\Xu^{t-1}(\ell),\Yu^{t-1}(\ell))=(x^{t-1},y^{t-1})\right\}\right|},\label{eq:layer_t_all}
\end{align}
where for $t=1$ 
\[
\left|\left\{\ell:(\Xu^{t-1}(\ell),\Yu^{t-1}(\ell))=(x^{t-1},y^{t-1})\right\}\right|=L.
\]
We have already proved that by appropriate coding, we can make the first term in \eqref{eq:layer_t_all}  converge to the first term in \eqref{eq:single_layer_t_all}  with probability one. By induction, we can prove that the same result is true for any other term in  \eqref{eq:layer_t_all} and its corresponding term  in \eqref{eq:single_layer_t_all}.  After proving this, since all the terms in \eqref{eq:layer_t_all} and as a result their product are positive and  upper-bounded by $1$, we can use the Dominated Convergence Theorem (see, for example, \cite{Durett:91}) to show that \eqref{eq:p_e} can be made arbitrary small.

To apply induction, assume there exist some coding schemes by which  we make the first $t-1$ terms in \eqref{eq:layer_t_all} each converge to the corresponding term in \eqref{eq:single_layer_t_all}  almost surely.  Using this assumption, we prove that the same thing is true for the $t^{\rm th}$ term as well.

Note that when the first $t-1$ terms are very close,  the frequency of occurrence of each pattern $\{(\Xu^{t-1}(\ell),\Yu^{t-1}(\ell))=(x^{t-1},y^{t-1})\}$ across the layers in $\underline{\Nc}$  is very close to the pattern's probability. Since the two networks perform the same except for link $[a,b]$, the network guarantees that the frequency of $\{(\Xu^{t}(\ell),\Yu^{t-1}(\ell))=(x^{t},y^{t-1})\}$ is also close to its probability in $\underline{\Nc}$. In order to finish the proof, we use Lemma \ref{lemma:1} proved in Appendix 1. 
\begin{lemma}\label{lemma:1}
If we choose the random codes used at times  $t-1$ and $t$ independently, then
\begin{align}
\E&[\ind_{\Yu_t(1)=y_t}|(\Xu_{t-1}(1),\Yu_{t-1}(1))=(x_{t-1},y_{t-1}),\nonumber\\
& \Xu_t(1)=x_t]= \P\left(\Yu_t(1)=y_t|\Xu_t(1)=x_t\right),
\end{align}
where the expectation is both with respect to the network and the code selections.
\end{lemma}

\end{itemize}
\end{proof}

\subsection{Sources with memory}
Assume that the sources are no longer memoryless but mixing. That is for any integers $k$ and $T$
\begin{align*}
&\left|\P\left((U^{(1),k},\ldots,U^{(m),k},U_T^{(1),T+k},\ldots,U_T^{(m),T+k}) = \right.\right.\nonumber\\
&\hspace{0.5cm}\left.(u^{(1),k},\ldots,u^{(m),k},u_T^{(1),T+k},\ldots,u_T^{(m),T+k}) \right) - \nonumber\\
& \P\left((U^{(1),k},\ldots,U^{(m),k}) = (u^{(1),k},\ldots,u^{(m),k})\right)\times \nonumber\\
&\left.\P\left((U_T^{(1),T+k},\ldots,U_T^{(m),T+k}) =  (u_T^{(1),T+k},\ldots,u_T^{(m),T+k})\right)\right|
\end{align*}
goes to $0$ as $T$ approaches $\infty$. In the proof of Theorem \ref{thm:main}, we used the fact that the sources are correlated and jointly i.i.d.  to conclude that the inputs to the copies of a channel  in the stacked network are i.i.d.   If the sources have memory, this does not hold any more. But, if we assume that the sources are mixing, then for block length $L$  large enough,  the two sets $\{U^L,U_{2L+1}^{3L},\ldots\}$ and $\{U_{L+1}^{2L},U_{3L+1}^{4L},\ldots\}$ look like two i.i.d. sequences.  Therefore, in the stacked network, if we code the even-numbered layers together and the odd-numbered ones together, such that each one is done separate from the other one, we  get back to the i.i.d. regime and can prove a similar result.

\renewcommand{\theequation}{A-\arabic{equation}}
\setcounter{equation}{0}  

\section*{Appendix A: Proof of Lemma \ref{lemma:1}}
Note that
\begin{align}
\E&\left[\ind_{\Yu_t(1)=y_t }|\Xu_{t-1}(1)=x_{t-1},\Yu_{t-1}(1)=y_{t-1},\Xu_t(1)=x_t \right]\nonumber\\
=&\sum\limits_{\su_1,\hat{\su}_1,\su_2}\P\left(\underline{Y}_t(1) =y_t ,\Xu_{t-1}(2:N)=\su_1,\right.\nonumber\\
&\hspace{1cm}\Yu_{t-1}(2:N)=\hat{\su}_1,\Xu_t(2:N)=\su_2|\Xu_{t-1}(1)=x_{t-1},\nonumber\\
&\hspace{1cm}\left.\Yu_{t-1}(1)=y_{t-1},\Xu_t(1)=x_t \right)\nonumber\\
=&\sum\limits_{\su_2}\P(\Yu_t(1)=y_t |\Xu_t=[x_t ,\su_2])\P(\Xu_t(2:N)=\su_2|\nonumber\\
&\Xu_{t-1}(1)=x_{t-1},\Yu_{t-1}(1)=y_{t-1},\Xu_t(1)=x_t ).\label{eq:a1}
\end{align}
But
\begin{align}
&\P(\Yu_t(1)=b |\Xu_t=\xu_t) \nonumber\\
&= \sum\limits_{\yu_t:\yu_t(1)=b }\frac{\P(\Xu_t=\xu_t,\Yu_t=\yu_t)}{\P(\Xu_t=\xu_t)}\\
& = \frac{1}{\P(\Xu_t=\xu_t)}\sum\limits_{\Yu_t:\Yu_t(1)=b }\P(\Xu_t(1)=\xu_t(1))\times \nonumber\\
&\P(\Yu_t(1)=y_t |\Xu_t(1)=\xu_t(1))\times\nonumber\\
&\P(\Xu_t(2:N)=\xu_t(2:N)|\Xu_t(1)=\xu_t(1),\Yu_t(1)=b)\times\nonumber\\
&\P(\Yu_t(2:N)=\yu_t(2:N)|\Xu_t=\xu_t,\Yu_t(1)=\yu_t(1))\\
& = \frac{1}{\P(\Xu_t=\xu_t)}
\sum\limits_{\yu_t:\yu_t(1)=b }\P(\Xu_t(1)=\xu_t(1))\times \nonumber\\
&\P(\Yu_t(1)=y_t |\Xu_t(1)=\xu_t(1))\times\nonumber\\
&\P(\Xu_t(2:N)=\xu_t(2:N)|\Xu_t(1)=\xu_t(1))\times\nonumber\\
&\P(\Yu_t(2:N)=\yu_t(2:N)|\Xu_t=\xu_t,\yu_t(1)=b)\\
& = \P(\Yu_t(1)=b |\Xu_t(1)=\xu_t(1))\times \nonumber\\
&\sum\limits_{\yu_t(1)=y_t }\P(\Yu_t(2:N)=\yu_t(2:N)|\Xu_t=\xu_t,\Yu_t(1)=b)\nonumber\\
& = \P(\Yu_t(1)=b |\Xu_t(1)=\xu_t(1)).\label{eq:a2}
\end{align}
Combining \eqref{eq:a1} and \eqref{eq:a2} yields the desired result.

\section*{Acknowledgments}
SJ is supported by the Center for Mathematics of Information at Caltech, and ME is supported by the DARPA ITMANET program under grant number W911NF-07-1-0029.

\end{document}